\newtheorem{Thm}{Theorem}[section]
\newtheorem{Ass}[Thm]{Assumption}
\newtheorem{Def}[Thm]{Definition}
\newtheorem{Prop}[Thm]{Proposition}
\newtheorem{Lem}[Thm]{Lemma}
\newtheorem{Cor}[Thm]{Corollary}
       \title{Pricing and Hedging the No-Negative-Equity Guarantee in Equity-Release Mortgages}
\author[1]{Kevin Engelbrecht\thanks{E-mail: \textit{k.engelbrecht@warwick.ac.uk}}}
\author[1,2]{Saul Jacka\thanks{Corresponding author. Saul Jacka gratefully acknowledges funding received from the EPSRC grant EP/P00377X/1 and is also grateful to the Alan Turing Institute for their financial support under the EPSRC grant EP/N510129/1.}
\thanks{ E-mail: \textit{s.d.jacka@warwick.ac.uk}\\Department of Statistics\\University of Warwick\\Coventry CV4 7AL, UK}}
\affil[1]{University of Warwick}
\affil[2]{Alan Turing Institute}
\begin{document}
\maketitle
  
\begin{abstract}
	We provide a practical superhedging strategy for the pricing and hedging of the No-Negative-Equity-Guarantee (NNEG) found in Equity-Release Mortgages (ERMs), or reverse mortgages, using a discrete-time model. In contrast to many papers on the NNEG and industry practice we work in an incomplete market setting so that deaths and property prices are not independent under most pricing measures. We give theoretical results and numerical illustrations to show that the assumption of market completeness leads to a considerable undervaluation of the NNEG. By introducing an Excess-of-Loss reinsurance asset, we show that it is possible  to reduce the cost of the  superhedge for a portfolio of ERMs with the average cost decreasing rapidly as the number of lives in the portfolio increases. All the hedging assets, with the exception of cash, have a term of one year making the availability of a property hedging asset from over-the-counter derivative providers more realistic. We outline how a practical multi-period ERM pricing and hedging model can be built. Although the prices identified by this model will be higher than prices under the completeness assumption, they are considerably lower than those under the Equivalent Value Test mandated by the UK's Prudential Regulatory Authority.
	\end{abstract}
	
\indent
KEYWORDS: 	Equity-release mortgages; Reverse mortgages; No-negative-equity-guarantee; Incomplete markets; Superhedging \\
\indent	JEL classification: G13; G22; G38.

\indent MSC classifications: 91G05; 91G10; 91G20 .

      	\section{Introduction}
	Equity-release mortgages(ERMs) or reverse mortgages have become very popular in recent years, both with income-poor capital-rich homeowners wishing to release some equity in their properties and with insurers seeking higher-yielding assets to back  	their annuity portfolios as yields on gilts and bonds have dropped to historic lows (according to the UK's Equity Release Council, insurers issued \pounds 3.92 billion of ERM loans in 2019).
	
Given the relatively large size of ERM assets backing pensioners' annuities and the complex derivative embedded in these contracts, concerns have been raised over the valuation approach adopted by insurers (see \cite{dowd2018asleep}, \cite{dowd2019valuation} and \cite{IrishERM}). 

Under a typical ERM policy, the homeowner  receives a loan at the start of the contract. The loan increases with interest until the homeowner dies or sells the property at time $t$.  Almost all ERM policies have what is called a "No Negative Equity Guarantee" (NNEG) which means that the homeowner is not liable for the shortfall 
\begin{equation}
	(L_t-S_t)^+ \label{eq:PropPutPayout}
\end{equation}
between the accumulated loan amount $L_t$ and the sale proceeds $S_t$. 

We assume initially that $L_t$ is the same for all policies in the portfolio (our results are generalised in subsection \ref{VaryingLoans} to the case where loan amounts differ). This means that if there are $D_t$ deaths in year $t$ in a portfolio of ERM policies, the insurer would lose cashflows to the value of 
\begin{equation*}
	 	D_tL_t-D_t\min(S_t,L_t)=D_tL_t-D_t(L_t-(L_t-S_t)^+)=D_t(L_t-S_t)^+.
	 \end{equation*}
This shows that all lifetime mortgages with the NNEG contract condition embody a derivative with the payout
	 \begin{equation}
	 	D_t(L_t-S_t)^+.	\label{eq:NNEGPayout}
	 \end{equation}	 
 We refer to the option with the payout (\ref{eq:NNEGPayout}) as the NNEG option (or NNEG, for short) and the embedded property put option with payout (\ref{eq:PropPutPayout}) as the property put.

At the time of writing, UK insurance companies typically value ERMs by using a discounted (expected) cashflow (DCF) approach. The  cost of the NNEG option for each time period $t$ is calculated as the expected number of deaths times the cost of the property put (usually calculated using the Black-Scholes formula). For a risk-free discount rate of zero and a pricing measure $\mathbb{Q}$ with the correct marginals, this is equivalent to

\begin{equation}
	\text{Cost NNEG } = \mathbb{E^Q}[D_t]\mathbb{E^Q}[(L_t-S_t)^+].		\label{eq:DCFValue}
\end{equation}

We believe that there is a significant problem with this valuation approach because, by (\ref{eq:NNEGPayout}),  we should rather be calculating $\mathbb{E^Q}[D_t(L_t-S_t)^+]$. If $D_t$ and $S_t$ are independent under $\mathbb{Q}$ (for example, if $\mathbb{Q}$ is the independence coupling of $D_t$ and $S_t$), then we do have

\begin{equation}
	\mathbb{E^Q}[D_t(L_t-S_t)^+]=\mathbb{E^Q}[D_t]\mathbb{E^Q}[(L_t-S_t)^+].	\label{eq:IndepDeathsAndHousePrice}
\end{equation}

Unfortunately, because the insurance markets are incomplete, there are many pricing measures or equivalent martingale measures (EMM) $\mathbb{Q}$. Under most of these EMMs $\mathbb{Q}$, (\ref{eq:IndepDeathsAndHousePrice}) does not hold. According to asset pricing theory (see \cite{MathArbit2006} or \cite{follmer2011stochastic}  for an introduction in discrete time), there will be a range of arbitrage-free prices $[P_L,P_H]$, where 

\begin{equation}
	P_L:= \inf_{\text{EMMs }\mathbb{Q}}\mathbb{E^Q}[D_t(L_t-S_t)^+] 
\end{equation}
and
\begin{equation}
	P_H:= \sup_{\text{EMMs }\mathbb{Q}}\mathbb{E^Q}[D_t(L_t-S_t)^+]. 							\label{eq:PH}
\end{equation}

The price of the NNEG option calculated using the DCF method will lie in the range $[P_L,P_H]$ and is unlikely to be sufficiently large to construct a hedge that will cover all of the claims under the NNEG option. 

We note that there is a promising result in the paper of  Jacka et al. (\textsection 2.3 in  \cite{jacka2015coupling}) in which they show that a continuous-time $(\mathcal {F}_t)$-Markov chain  and $(\mathcal {F}_t)$-Brownian motion under a common filtration $(\mathcal {F}_t)$ are necessarily independent. This would appear to justify the assumption of independence between the deaths (Markov) process  and the property-price process for continuous-time models of the ERM. However, in {\em practice} there are many reasons that could invalidate the independence assumption (for example, ERMs are  discretely-valued and hedged  rather than continuously-hedged and trading costs are present, to name but two).

Without a coherent hedging strategy to handle the complex nature of the NNEG option, over-the-counter derivative providers are less likely to provide insurers with hedging assets. 

The main contribution of this paper is that we find a superhedging strategy for the NNEG option which turns out to be remarkably cheap. This is surprising because superhedges are usually prohibitively expensive to set up (but on the positive side superhedges are very prudent) and are not usually regarded as feasible hedging strategies.

 We do so by assuming the availability of a realistic additional hedging asset--- an excess of loss reinsurance (XoL)  contract --- and use it to price a portfolio of $n$ ERMs simultaneously. 

The motivation for this approach is the following: for an ERM portfolio consisting of infinitely-many lives the Strong Law of Large Numbers (SLLN) would allow us to guarantee the proportion $p$ of lives that die  and therefore we could perfectly hedge the NNEG at a cost of $pq$ per ERM by purchasing $p$ property put options, where $q$ is the cost of the property put. 

By assuming that there is a reinsurance contract available that pays out the excess number of deaths above 
\begin{equation}
	e(n)=np(1+\epsilon),\qquad n \in \mathbb{N}, \label{eq:e(n)}
\end{equation}
where $\epsilon > 0$, we can price the portfolio of ERMs as though the portfolio had {\em close to infinitely many lives}.

Indeed, in Theorem \ref{SuperhedgePrices}, we show that a cheapest superhedge for the NNEG option will be one of the following superhedges constructed from the candidate hedging assets (cash, a property stock, the XoL contract and  life assurances on each of the $n$ lives): 
\begin{itemize}
\item a single Group Life Assurance (GLA) costing $np$
\item $n$ property puts costing $nq$
\item $e(n)$ property puts and a single XoL contract with excess $e(n)$ costing $e(n)q+X_0^e(n)$, where $X_0^e(n)$ is the price of the XoL contract
\end{itemize}

and, then in Proposition \ref{AsymptoticHedge}, we show that, for  $n$ sufficiently large and $\epsilon>0$ sufficiently small, the cheapest superhedge is always $e(n)$ property puts and an XoL contract with excess $e(n)$.
The average cost per life of this superhedge is
\begin{equation*}
	\frac{1}{n}(e(n)q+X_0^{e(n)})=pq(1+\epsilon)+\frac{1}{n}X_0^{e(n)}\rightarrow pq(1+\epsilon),\text{as }n\rightarrow \infty.
\end{equation*}

In effect, the price of the reinsurance contract is the price to be paid to gain access to the SLLN. Notice that, by  (\ref{eq:DCFValue}),  the "DCF Black-Scholes" approach gives a value of only $pq$ to the NNEG which is the same value as the  the hypothetical portfolio consisting of infinitely-many lives mentioned earlier.

The use of an excess slightly larger than the average number of deaths reflects the reality that reinsurance would not be available without some such "experience margin". A very useful financial consequence is that  a Large Deviations Principle will apply to the price of the reinsurance contract and its cost per life will be very small for even relatively small portfolios of lives. In fact, in Proposition \ref{XoL_LDP}, for the single-period model, we show that the price of the reinsurance contract tends to zero exponentially fast as the number of lives goes to infinity.

All the assets, with the exception of cash, will have a term of one period (usually a year) only. One reason for this is that we shall be using a static mark-to-market hedge-and-forget strategy. At the end of each year the superhedge will cover all claims occurring during the year plus a potential release of surplus. The use of short-term disposable assets for hedging makes sense in practice: over-the-counter derivative providers are more willing to provide property derivatives linked  to residential property or property indices if the term of the contract is short (see summary of responses to Question 8 in CP48/16  
\cite{CP48/16})

The single-period superhedge can be easily extended to multiple time periods. When the first time period terminates, any surplus released from the superhedge contributes to the setup costs of the superhedge for the next period. In order to ensure that there is sufficient release of surplus to construct the next superhedge, the hedging strategy should be calculated working backwards in time. For example, for a model with $T$ time periods and $N$ lives, a multinomial tree model can be constructed with $(t+1)(n+1)$ nodes at time $t\in \{1,2,\cdots,T\}$. Starting at time $T-1$, the superhedge for each time $(T-1)$-node is given by Theorem \ref{SuperhedgePrices}.  Using the superhedging costs determined in  the previous time step, the superhedges for the time $(T-2)$-nodes can then be determined using linear programming. Continuing in this way  the cheapest time-zero superhedge can be found. 

The number of nodes in the tree and hence the number of linear optimisation problems that need to be solved by the computer is not excessive. For example, for $T=40$ years with annual time intervals and $N=100$ lives, there would only be  $40\times101=4040$ nodes at time $T-1$.

\section{Numerical results for the NNEG option}
\subsection{Single-period results}

The graph in figure \ref{fig:SinglePeriod} shows how the average cost of the cheapest superhedge reduces with increasing number of lives $n$. 

The probability of death $p$ is $0.45$ and the cost of the property put $q$ is $0.5374$. The XoL price $X_0^{e(n)}$ is calculated assuming that the number of deaths is Bernoulli-distributed with parameters $n,p$ and $e(n)$ is given by (\ref{eq:e(n)}) with $\epsilon=0.1$. 

We can see from the graph that, for small $n$, the cheapest superhedge costs $0.45$ which corresponds to holding a GLA. Note that for these small values of  $n$, $e(n)<1$ and the XoL contract reduces to a GLA on $n$ lives and that $p<q$. For larger $n$, the superhedging strategy of holding one XoL contract with excess $e(n)$ and $e(n)$ property puts becomes cheaper as predicted by Theorem \ref{SuperhedgePrices} and Proposition \ref{AsymptoticHedge}.

\begin{figure}
   \includegraphics[width=4in]{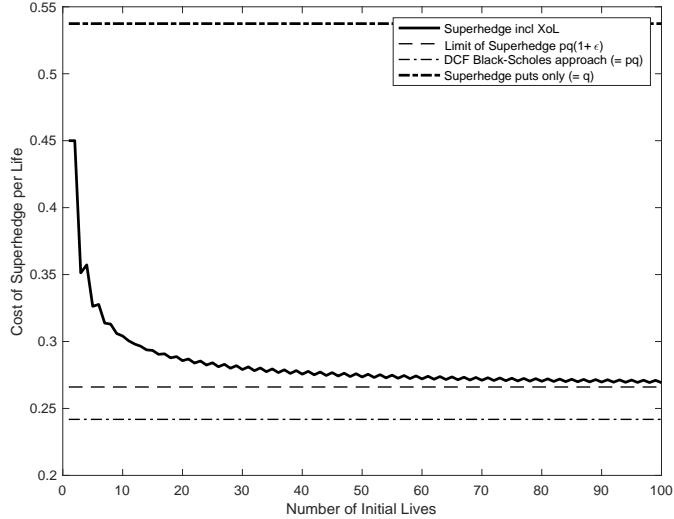}
     \caption{Single-period model - Superhedge cost versus number of initial lives. Notes: probability of death $p=45\%$, initial house price $=100$, probability house price falls $q=53.74\%$, initial loan $=87.07$, $\epsilon=10\%$, risk-free interest rate $=0\%$}
      \label{fig:SinglePeriod}
   \end{figure}

\afterpage{\FloatBarrier}

\subsection{Multi-period results}

We built a practical multi-period model in Matlab which can calculate the superhedging cost and strategy for the NNEG using the method described in the introduction. We chose parameters similar to those in Table 1 in  \cite{dowd2019valuation} for comparison purposes. Our parameters were the following:

\begin{itemize}
\item the age of policyholders at the start of the ERM contract is 70
\item the initial house price is 100
\item house price volatility is 15\% p.a.
\item the initial loan amount is 40 with loan interest of 5\% p.a.
\item the risk-free rate is 0\% p.a. 
\item the mortality table used is A67/70 with no adjustments for early redemptions or long term care exits
\item the dividend yield (deferment rate) for the property stock is assumed to be zero 
\end{itemize}

The results of the multi-period model for varying number of lives is shown in  figure \ref{fig:MultiPeriod}. The value "DCF Black-Scholes" in the figure is the value of the NNEG determined using (\ref{eq:IndepDeathsAndHousePrice}) for multiple periods. i.e. it is equal to $\sum_1^Tq_tP(0,L_t)$, where $q_t$ is the expected number of deaths in period $t$ and $P(0,L_t)$ is the price of a European put option at time zero, with term $t$ and  strike $L_t$  equal to the accumulated loan amount at time t. The ERM actuaries often call this the risk-neutral Black-Scholes NNEG value. As can be seen from figure \ref{fig:MultiPeriod}, it is significantly lower than the superhedge even for a large number of lives $n$. 
\subsection{Impact of deferment rates}
A modification of the risk-neutral Black-Scholes approach used by some insurers is the so called "real-world Black-Scholes" valuation approach, where house prices are inflated at a rate in excess of the risk-free interest rate, producing an even lower value for the NNEG than the risk-neutral Black-Scholes approach! 

Because insurers were/are using these low valuations for the NNEG in  the net cashflow calculation for the construction of securitised assets to back their annuity portfolios, the Equivalent Value Test (EVT) was introduced by the UK's Prudential Regulatory Authority (PRA) as a \lq\lq diagnostic tool" to help insurers assess  whether the yields on their securitised securities is excessive (see the UK's PRA publication SS2/17 on illiquid assets \cite{SS2/17}). For the EVT test, the PRA chose the Black-Scholes model for a dividend-paying stock to value the NNEG using the concept of "deferment rates" to justify the use of such a model. This model can produce values for the NNEG which are considerably higher than those of the standard Black-Scholes non-dividend-paying stock model for reasonable dividend rates. A dividend(deferment) rate of $q=4.2\%$ was used to produce the values in Table 1 in  \cite{dowd2019valuation}. The lowest NNEG value quoted there was 31.5 which is between two and three times higher than the "Superhedge including XoL"  NNEG cost in figure \ref{fig:MultiPeriod}.  

\begin{figure}
    \includegraphics[width=4in]{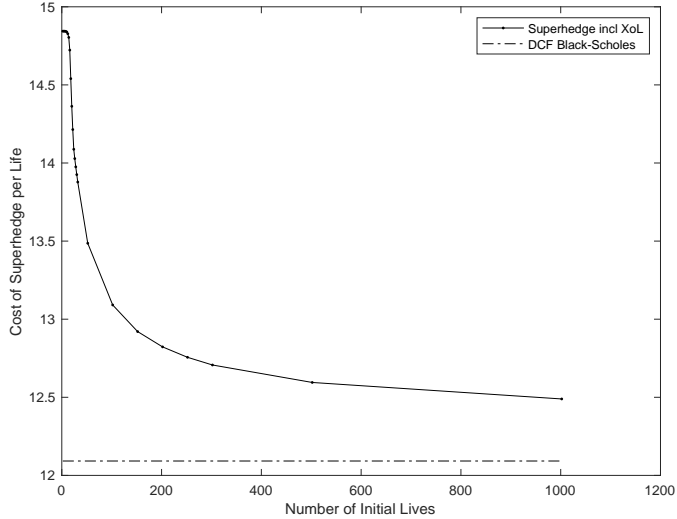}
    \caption{Multi-period model - Superhedge cost versus number of initial lives. Notes: Age $=70$, initial house price $=100$, house price volatility $=15\%p.a.$, initial loan $=40$, loan interest $=5\%p.a.$, risk-free interest rate $=0\%p.a.$}
    \label{fig:MultiPeriod}
    \end{figure}

\afterpage{\FloatBarrier}

\section{Setting}
	We represent time by $t\in\{0,1\}$. At time $t=0$, we assume that there are $n$ identical lives, each of whom purchases a lifetime mortgage.  The financial market consists of an insurance market with assets which are insurance contracts written on the $n$ lives and a non-insurance market consisting of a cash bond and a property stock. We assume that all assets are liquid and can be traded frictionlessly. Any long or short position may be held in the assets. We  refer to the combined insurance and non-insurance market as the Combined Market.
	
To simplify the presentation, we assume that the Combined Market is normalised. i.e. all price processes have been discounted by the cash bond price process. The superhedging strategies or pricing measures are not affected by working in a normalised market. 

If one wishes to explicitly model the effects of interest, then one can multiply the relevant formulae by the factor $1+r$, where $r$ is the assumed rate of interest earned by the original cash bond. For example, the undiscounted version of the NNEG payout (\ref{eq:NNEGPayout}) is ${D_t(L_t(1+r)-S_t(1+r))^+}$. The pricing constraints (\ref{eq:ACMMStock}) - (\ref{eq:ACMMXoL}) are unaffected because we simultaneously inflate the asset payouts  and divide by $(1+r)$ for discounting. For a similar reason, there is no net effect on duality equation (\ref{eq:DualityEquality}).

	\subsection{The probabilistic framework}
	
	 The Combined Market is  represented by the measurable space $(\Omega,\mathcal{F})$, where $\Omega=\{0,1\}^{n+1}$ and $\mathcal{F}=\mathcal{P}(\Omega)$ is the powerset of $\Omega$. 
	 
	 Corresponding values are as follows:
	 
	 For $\omega=(\omega_{0},\cdots,\omega_{n})\in\Omega$, $S_1=u1_{(\omega_0=1)}+d1_{(\omega_0=0)}$ so that  $\omega_{0}$ is the indicator for an up-jump in the property price index. 
	 Similarly if $D^i$ is the indicator that the $i$th life has died by time 1, then $D^i={\omega_i}$. The number of deaths is given by the random variable $D_1 = \omega_1+\cdots +\omega_n$. We sometimes write $D_1(n)$ instead of  $D_1$ if we wish  to emphasise the number of lives.

\subsection{The non-insurance market}
	The non-insurance market consists of the following assets:
	
	\begin{enumerate}
	\item A (normalised) cash bond with a constant price of one. 	
	\item A tradable property stock (which could be an asset provided by an over-the-counter provider such as a derivative on a property index or residential properties) whose value the price processes of the $n$ properties are assumed to follow. The model for the price process $\{S_{t}\}_{t=0,1}$ of the property stock is assumed to be a binomial tree
		\begin{equation*}
			S_{1}(\omega)=
			\begin{cases}
				S_{0}u, & \text{if } \omega_{0}=1 \\
				S_{0}d, & \text{otherwise,}  
			\end{cases}
		\end{equation*}
for $\omega=(\omega_{0},\cdots,\omega_{n})\in \Omega$, where $S_{0}>0$  is the price at time zero of the property stock. To ensure that the binomial stock model does not have arbitrage, we assume that $0<d<1<u$.	
	
	\end{enumerate}

\subsection{The insurance market}
	
	The insurance market consists of the following assets commencing at time $t=0$ and written on the $n$ 	lives:
	
	\begin{enumerate}
	
	\item $n$ single-life assurances written on each of the $n$ lives  for a price $p\in(0,1)$.  If the life dies over $[0,1]$, then the contract pays out an amount of one; otherwise the contract expires worthless. We denote its price process by $\{L^{(i)}_t\}_{t=0,1}$. So $L^{(i)}_0(\omega)=p$ and  $L^{(i)}_1(\omega)=\omega_i$, for $\omega\in\Omega$ and $i\in\{1,2,\cdots,n\}$.

	\item A group life insurance (GLA) written on all $n$ lives costing $np$. The GLA pays out an amount of one for each life that has died over the period $[0,1]$. The GLA has the price process $\{G_t\}_{t=0,1}$. The 	payout or 	price at time one of the GLA is
			\begin{equation}
				G_1(\omega)=D_1(\omega),\qquad \omega \in \Omega.
			\end{equation}	
			
Note that the GLA is a redundant asset because it can be constructed from a holding of $n$ single-life assurances, each costing $p$. However, we  see later that, for superhedging, only a GLA  and not individual single-life assurances is needed.
	
	\item An excess of loss reinsurance (XoL) on the $n$ lives with excess $e\in[0,n]$. We  denote its price process by $\{X_{t}^{e}\}_{t=0,1}$. The payout or price of the XoL at time one  is given in terms of the total number of deaths $D_1$ by
		\begin{equation}
			X_{1}^{e}(\omega)=(D_{1}(\omega)-e)^{+},\qquad \omega \in \Omega.
		\end{equation}
		If we wish to emphasize that $e$ and $X_0^e$ are functions of the number of lives $n$, then we  denote them 	by $e(n)$ and $X_0^e(n)$. If we wish  to emphasise the probability of death $p$ as well, we  write $X_0^e(n;p)$.
		
	\end{enumerate}

\subsection{The lifetime mortgage}
	At time $t=0$ the insurer issues $n$  identical lifetime mortgages  with loan amount $L>0$. The loan amount is assumed to be less than the value of the property. We assume that any interest added to the loan is included in the loan amount $L$. We do not consider any other contract terminations or redemptions such as long-term care (LTC) or downsizing. The hedging and pricing strategy developed in this paper could be easily extended to incorporate LTC exits if there are LTC insurance contracts available to use as hedging assets.

\subsection{The set of pricing measures}

We denote by $\mathbb{P}$ the reference measure on $(\Omega,\mathcal{F})$.

\begin{Ass}\label{ass:ProbP}
	Under $\mathbb{P}$, the random variables $\{L_1^{(i)}\}_{i=1}^n$ are independent and Bernoulli-distributed with probability of success(death) of $p$. 
\end{Ass}
Note that for simplicity we have assumed that lives are independent under the reference measure $\mathbb{P}$.  The probability measure $\mathbb{P}$ will be used as a reference measure against which all pricing measures must be absolutely-continuous. We work with the larger class of absolutely-continuous pricing measures rather than equivalent pricing measures because the extremal pricing measures for (\ref{eq:PH}) which we find in Theorem \ref{SuperhedgePrices} are only absolutely-continuous.

Note that the superhedging price is not increased if we take the supremum in (\ref{eq:PH}) over the larger set of all absolutely-continuous pricing measures with respect to $\mathbb{P}$  (see Remark 1.36 in \cite{follmer2011stochastic}).

In order to calculate the superhedging price of the random payouts (\ref{eq:NNEGPayout}) in the Combined Market, we need to determine $\cal M$, the collection of probability measures $\mathbb{Q}$ on $(\Omega, \cal F)$ which are absolutely-continuous with respect to $\mathbb{P}$ and such that all discounted asset price processes are martingales under $\mathbb{Q}$.  
\begin{Def}
 Members of $\mathcal{M}$ are called absolutely-continuous  martingale measures (ACMMs).

\end{Def}

The constraints imposed on $\mathbb{Q}$ to belong to $\cal M$ are:

\begin{enumerate}
	\item $\mathbb{Q}$ must satisfy
	\begin{equation*}
		S_0=S_0d\mathbb{Q}(S_1=S_0d)+S_0u\mathbb{Q}(S_1=S_0u).
	\end{equation*}
	After simplifying, we obtain the standard Binomial tree \emph{stock pricing constraint}
	\begin{equation}
		\mathbb{Q}(S_1=S_0d) =q:=\frac{u-1}{u-d}. \label{eq:ACMMStock}
	\end{equation}
	
	\item  $\mathbb{Q}$ must satisfy
	the \emph{single-life assurance pricing constraint}
	\begin{equation}
		\mathbb{Q}(L_1^{(i)}=1) =p. \label{eq:ACMMSLA}
	\end{equation}
	
	\item $\mathbb{Q}$ must satisfy the \emph{XoL pricing constraint}:	
	\begin{equation}
		X_0^e=\sum_{k>e}^n(k-e)\mathbb{Q}(D_1=k). \label{eq:ACMMXoL}
	\end{equation}
\end{enumerate}

The pricing constraints (\ref{eq:ACMMStock}) - (\ref{eq:ACMMXoL}) suggest, that for pricing the NNEG option, we can replace the single life pricing constraint (\ref{eq:ACMMSLA}) by the weaker GLA constraint
	\begin{equation}
		G_0=np=\sum_{k=1}^nk\mathbb{Q}(D_1=k) \label{eq:ACMMGLA}
	\end{equation}
 and use a subset of $\mathcal{M}$ of exchangeable measures defined below:

\begin{Def}
The set $\mathcal{M}^e\subset\mathcal{M}$ of  exchangeable measures is defined to be the set of all $\mathbb{Q}\in \mathcal{M}$ under which the sequence of random variables $\omega_1, \cdots, \omega_n$ is exchangeable for $\omega=(\omega_0,\omega_1, \cdots ,\omega_n) \in \Omega$.
\end{Def}
Note that, in the definition of exchangeability, the first element $\omega_0$ representing property prices is not included in the exchangeable sequence. 

\begin{Prop} \label{Prop:QdashwithMarginals}
	Given $x_k,y_k \ge 0$ and $z_k=x_k+y_k$, for $k=0,1,\cdots ,n$ satisfying
\begin{IEEEeqnarray}{rCl}
	1 		&=& 		\sum_{k=0}^n z_k, \\
	q		&=& 		\sum_{k=0}^n y_k,\\
	np		&=&		\sum_{k=0}^n kz_k,		\label{eq:npsum} \\
	X_0^e	&=&		\sum_{k=0}^n (k-e)^+z_k,
\end{IEEEeqnarray}

there exists $\mathbb{Q'} \in \mathcal{M}^e$ such that
\begin{IEEEeqnarray}{rCl}
	x_k &=& \mathbb{Q'}(D_1=k,S_1=S_0u)\\
	y_k &=& \mathbb{Q'}(D_1=k,S_1=S_0d),
\end{IEEEeqnarray}
for $k=0,1,\cdots,n$.
\end{Prop}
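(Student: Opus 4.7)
The plan is to construct $\mathbb{Q}'$ explicitly by spreading each mass $x_k$ (resp.\ $y_k$) uniformly over all atoms of $\Omega$ whose first coordinate is $1$ (resp.\ $0$) and whose remaining coordinates sum to $k$. Concretely, for each $\omega=(\omega_0,\omega_1,\ldots,\omega_n)\in\Omega$ with $\omega_0=1$ and $\sum_{i=1}^n \omega_i=k$, I would set $\mathbb{Q}'(\{\omega\})=x_k/\binom{n}{k}$, and for $\omega_0=0$ and $\sum_{i=1}^n\omega_i=k$, I would set $\mathbb{Q}'(\{\omega\})=y_k/\binom{n}{k}$. Non-negativity is immediate from $x_k,y_k\ge 0$, and since each level-$k$ slice contains $\binom{n}{k}$ atoms, summing gives $\sum_\omega \mathbb{Q}'(\{\omega\}) = \sum_k (x_k+y_k)=\sum_k z_k=1$, so $\mathbb{Q}'$ is a probability measure. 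Exchangeability of $(\omega_1,\ldots,\omega_n)$ is visible from the construction because $\mathbb{Q}'(\{\omega\})$ depends only on $\omega_0$ and on $\sum_{i=1}^n\omega_i$.

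Next I would verify the joint marginals and the pricing constraints. Summing over the $\binom{n}{k}$ atoms with $\omega_0=1$, $\sum_{i\ge1}\omega_i=k$ gives $\mathbb{Q}'(D_1=k,S_1=S_0u)=x_k$, and similarly $\mathbb{Q}'(D_1=k,S_1=S_0d)=y_k$, as required. The stock pricing constraint (\ref{eq:ACMMStock}) follows from $\mathbb{Q}'(S_1=S_0d)=\sum_k y_k=q$. The GLA/XoL constraints follow from the corresponding hypotheses on $\sum_k k z_k$ and $\sum_k (k-e)^+ z_k$ applied to the marginal distribution of $D_1$ under $\mathbb{Q}'$, namely $\mathbb{Q}'(D_1=k)=z_k$. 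The single-life assurance constraint (\ref{eq:ACMMSLA}) is then automatic: by exchangeability, $\mathbb{Q}'(L_1^{(i)}=1)$ is independent of $i$, so
\begin{equation*}
 n\,\mathbb{Q}'(L_1^{(i)}=1) = \sum_{j=1}^n \mathbb{E}^{\mathbb{Q}'}[\omega_j] = \mathbb{E}^{\mathbb{Q}'}[D_1] = \sum_{k=0}^n k z_k = np,
\end{equation*}
so $\mathbb{Q}'(L_1^{(i)}=1)=p$.

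The remaining point to check is absolute continuity $\mathbb{Q}'\ll\mathbb{P}$; since our constructed $\mathbb{Q}'$ can vanish on atoms where $\mathbb{P}$ does not, what matters is that $\mathbb{P}$ charges every atom of $\Omega$, which follows from Assumption \ref{ass:ProbP} together with the implicit requirement $\mathbb{P}(\omega_0=1)\in(0,1)$ needed for the binomial stock model to be non-degenerate. I anticipate no serious obstacle: the construction is a standard symmetrisation, and the four hypothesised identities are exactly calibrated to the four linear conditions defining membership of $\mathcal{M}^e$. The only mildly non-trivial observation is that exchangeability allows us to replace the $n$ single-life constraints (\ref{eq:ACMMSLA}) by the single aggregate constraint (\ref{eq:npsum}), which is what makes the proposition's hypotheses sufficient rather than merely necessary.
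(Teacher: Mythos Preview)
Your proposal is correct and follows essentially the same construction and verifications as the paper: the same uniform spreading of $x_k,y_k$ over atoms with given $\omega_0$ and $\sum_i\omega_i=k$, and the same checks of the marginals and pricing constraints. Your derivation of the single-life constraint via $n\,\mathbb{Q}'(L_1^{(i)}=1)=\mathbb{E}^{\mathbb{Q}'}[D_1]$ is a slightly slicker packaging of the paper's conditional-probability computation, and your explicit remark on absolute continuity is a detail the paper leaves implicit, but neither constitutes a different approach.
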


\begin{proof}
	Define $\mathbb{Q'}$ as follows: for $\omega = (\omega_0, \omega_1, \cdots,\omega_n) \in \Omega$ with $k=\sum_1^n\omega_i$, let
	\begin{IEEEeqnarray}{rCl}
	\mathbb{Q'} ({(0,\omega_1,\cdots,\omega_n)}) &=& \frac{1}{\binom{n}{k}}y_k, \\
	\mathbb{Q'} ({(1,\omega_1,\cdots,\omega_n)}) &=& \frac{1}{\binom{n}{k}}x_k.
\end{IEEEeqnarray}
Then
\begin{IEEEeqnarray}{rCl}
	\mathbb{Q'} (D_1=k,S_1=S_0d) &=& y_k, \label{eq:S0d}  \\
	\mathbb{Q'} (D_1=k,S_1=S_0u) &=& x_k. \label{eq:S0u}
\end{IEEEeqnarray}

From its construction, it is clear that $\mathbb{Q'}$ is an exchangeable measure on $\mathcal{F}$. We  prove that $\mathbb{Q'}$ is a martingale measure by showing that it satisfies (\ref{eq:ACMMStock}) - (\ref{eq:ACMMXoL}). For (\ref{eq:ACMMStock}),
\begin{equation*}
	\mathbb{Q'}(S_1=S_0d) = \sum_{k=0}^n \mathbb{Q'}(D_1=k,S_1=S_0d)=\sum_{k=0}^n y_k = q.
\end{equation*}

Note that, from (\ref{eq:S0d}) and  (\ref{eq:S0u}), it follows that $\mathbb{Q'} (D_1=k) = z_k$. Then for (\ref{eq:ACMMSLA}),
\begin{IEEEeqnarray*}{rCl}
		\mathbb{Q'}(i^{\text{th}}  \text{ life dies}) & = & \sum_{k=1}^n \mathbb{Q'}(i^{\text{th}}  \text{ life dies and }D_1=k) \\
		& = & \sum_{k=1}^n \mathbb{Q'}(i^{\text{th}}  \text{ life dies}|D_1=k)\mathbb{Q'}(D_1=k) \\
		& = & \sum_{k=1}^n \frac{\binom{n-1}{k-1}} {\binom{n}{k}} \mathbb{Q'}(D_1=k) \text{,by exchangeability} \\
		& = &  \frac{1} {n}\sum_{k=1}^n kz_k           \\
		& = & \frac{1}{n} (np), 	\qquad \text{by } (\ref{eq:npsum})\\
		& = & p.
\end{IEEEeqnarray*}

For (\ref{eq:ACMMXoL}),
\begin{IEEEeqnarray*}{rCl}
	\sum_{k=0}^n (k-e)^+\mathbb{Q'}(D_1=k) = \sum_{k=0}^n (k-e)^+z_k = X_0^e.
\end{IEEEeqnarray*}

It follows that $\mathbb{Q'}$ is an ACMM and $\mathbb{Q'} \in \mathcal{M}^e$.
\end{proof}

\begin{Cor}\label{EMeqEMdash}
	Let $\mathcal{G} = \sigma (D_1,S_1)$. Given any $\mathbb{Q} \in \mathcal{M}$, there exists $\mathbb{Q'} \in \mathcal{M}^e$ such that $\mathbb{Q}  |_{\mathcal{G}} = \mathbb{Q'}  |_{\mathcal{G}}$
so that
	\begin{equation}
		\sup_{\mathbb{Q} \in \mathcal{M}} \mathbb{E^Q} [D_1(L-S_1)^+] = \sup_{\mathbb{Q'} \in \mathcal{M}^e} \mathbb{E^{Q'}} [D_1(L-S_1)^+]. 			\label{eq:supEQeqEQdash}
	\end{equation}

\end{Cor}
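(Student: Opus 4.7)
The plan is to leverage Proposition \ref{Prop:QdashwithMarginals} directly. Given an arbitrary $\mathbb{Q} \in \mathcal{M}$, I would set
\begin{equation*}
x_k := \mathbb{Q}(D_1 = k, S_1 = S_0 u), \qquad y_k := \mathbb{Q}(D_1 = k, S_1 = S_0 d), \qquad z_k := x_k + y_k,
\end{equation*}
for $k = 0, 1, \ldots, n$, and then apply the proposition to obtain an exchangeable ACMM $\mathbb{Q'}$ whose joint law of $(D_1, S_1)$ matches that of $\mathbb{Q}$.

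The bulk of the argument is checking the four hypotheses of the proposition. That $\sum_k z_k = 1$ is immediate, and $\sum_k y_k = q$ follows by summing out $D_1$ and invoking the stock constraint (\ref{eq:ACMMStock}). The XoL identity $\sum_k (k-e)^+ z_k = X_0^e$ is just a restatement of (\ref{eq:ACMMXoL}) in terms of $z_k = \mathbb{Q}(D_1 = k)$. The only identity requiring a small argument is $np = \sum_k k z_k$: since $D_1 = \sum_{i=1}^n L_1^{(i)}$, linearity of expectation combined with the single-life constraints (\ref{eq:ACMMSLA}) gives $\mathbb{E}^{\mathbb{Q}}[D_1] = np$, which is exactly $\sum_k k z_k$.

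Having produced $\mathbb{Q'} \in \mathcal{M}^e$, by construction $\mathbb{Q}(D_1 = k, S_1 = s) = \mathbb{Q'}(D_1 = k, S_1 = s)$ for every admissible pair $(k,s)$; since such events generate $\mathcal{G} = \sigma(D_1, S_1)$, we conclude $\mathbb{Q}|_{\mathcal{G}} = \mathbb{Q'}|_{\mathcal{G}}$. The random variable $D_1 (L - S_1)^+$ is manifestly $\mathcal{G}$-measurable, so $\mathbb{E}^{\mathbb{Q}}[D_1(L - S_1)^+] = \mathbb{E}^{\mathbb{Q'}}[D_1(L - S_1)^+]$. Combined with the trivial inclusion $\mathcal{M}^e \subseteq \mathcal{M}$, this yields the equality of suprema (\ref{eq:supEQeqEQdash}).

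I do not anticipate a genuine obstacle: the corollary is essentially a bookkeeping consequence of the preceding proposition, together with the observation that the payoff we care about factors through $(D_1, S_1)$. The only mildly subtle point is remembering that membership in $\mathcal{M}$ requires $\mathbb{Q}$ to price every single-life assurance correctly (not merely the GLA), which is what delivers $\sum_k k z_k = np$ via the linearity step above.
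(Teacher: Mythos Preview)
Your proposal is correct and follows essentially the same approach as the paper: define $x_k,y_k$ from the joint law of $(D_1,S_1)$ under $\mathbb{Q}$, invoke Proposition~\ref{Prop:QdashwithMarginals} to obtain $\mathbb{Q'}\in\mathcal{M}^e$ with matching $\mathcal{G}$-restriction, and deduce the equality of suprema via $\mathcal{M}^e\subseteq\mathcal{M}$. The only difference is that you spell out the verification of the proposition's hypotheses (in particular the identity $\sum_k k z_k = np$ via the single-life constraints), whereas the paper dismisses this as ``straightforward''.
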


\begin{proof}
	For $k=0,1,\cdots ,n$, let
	\begin{IEEEeqnarray*}{rCl}
	y_k &=& \mathbb{Q} (D_1=k,S_1=S_0d) ,   \\
	x_k &=& \mathbb{Q} (D_1=k,S_1=S_0u).
\end{IEEEeqnarray*}
It is straightforward to show that $\{x_k\}_{k=0}^n$  and $\{y_k\}_{k=0}^n$ satisfy the assumptions of  Proposition \ref{Prop:QdashwithMarginals} and therefore there exists $\mathbb{Q'} \in \mathcal{M}^e$ with the same 'marginals' as $\mathbb{Q}$ i.e.
	\begin{equation*}
	\mathbb{Q'} (D_1=k,S_1=S_0d) = y_k    \text{ and }    \mathbb{Q'} (D_1=k,S_1=S_0u) = x_k.
	\end{equation*}
It follows that
\begin{equation*}
	\mathbb{E^Q} [D_1(L-S_1)^+] = \mathbb{E^{Q'}} [D_1(L-S_1)^+]
\end{equation*}
and because $\mathcal{M}^e \subset \mathcal{M}$, it follows that (\ref{eq:supEQeqEQdash}) holds.
\end{proof}

We  require that the Combined Market is arbitrage-free. This imposes the following upper bound on the price of the XoL contract:

\begin{Prop}
		A necessary and sufficient condition for the Combined Market    to be arbitrage-free is that the  price $X_0^e$ of the XoL contract must satisfy
		\begin{equation}
			\frac{X_0^e}{n-e}\ < p. \label{eq:XoLNoArb}
		\end{equation}
	\end{Prop}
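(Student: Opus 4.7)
The plan is to invoke the fundamental theorem of asset pricing in this finite discrete-time setting: the Combined Market is arbitrage-free if and only if there exists an equivalent martingale measure $\mathbb{Q}\sim\mathbb{P}$ lying in $\mathcal{M}$. Both directions hinge on the elementary pointwise inequality
\[
(k-e)^+ \;\le\; \tfrac{n-e}{n}\, k, \qquad k\in\{0,1,\dots,n\},
\]
which is strict except at $k\in\{0,n\}$.

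\textbf{Necessity.} I argue the contrapositive by exhibiting an explicit arbitrage. Suppose $X_0^e \ge (n-e)p$. Consider the portfolio that shorts one XoL contract and buys $(n-e)/n$ units of the GLA. The initial cash is $X_0^e - (n-e)p \ge 0$, and the time-one net payoff is $\tfrac{n-e}{n}D_1 - (D_1-e)^+$, which is nonnegative pointwise by the inequality above and strictly positive on $\{1,\dots,n-1\}$, an event of positive $\mathbb{P}$-probability under Assumption~\ref{ass:ProbP}. Hence the portfolio is an arbitrage.

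\textbf{Sufficiency.} Assume $X_0^e<(n-e)p$ (together with the implicit lower bound on $X_0^e$ needed for no-arbitrage in the XoL itself). I build an EMM equivalent to $\mathbb{P}$ via Proposition~\ref{Prop:QdashwithMarginals}: it suffices to produce strictly positive $x_k,y_k$ ($k=0,\ldots,n$) satisfying the four marginal constraints there, since the resulting $\mathbb{Q}'$ then assigns positive mass to every singleton of $\Omega$. I split the task into (a) constructing a strictly positive probability vector $(z_k)_{k=0}^n$ with $\sum k z_k=np$ and $\sum (k-e)^+ z_k=X_0^e$, and (b) setting $y_k=q z_k$, $x_k=(1-q)z_k$. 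Step (b) is immediate because $q=(u-1)/(u-d)\in(0,1)$ by $0<d<1<u$, and it automatically delivers $\sum y_k=q$ and strict positivity of $x_k,y_k$. For step (a), I interpolate $\lambda\mu + (1-\lambda)\nu$ between the extremal measure $\mu=(1-p)\delta_0+p\,\delta_n$, for which $\mathbb{E}_\mu[(D_1-e)^+]=(n-e)p$, and a strictly positive reference law $\nu$ on $\{0,\dots,n\}$ with mean $np$, for example Binomial$(n,p)$. Every convex combination has mean $np$, and $\lambda\mapsto\mathbb{E}_{(1-\lambda)\nu+\lambda\mu}[(D_1-e)^+]$ is continuous and ranges from $\mathbb{E}_\nu[(D_1-e)^+]<(n-e)p$ (strict by the pointwise inequality above) up to $(n-e)p$. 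Since $X_0^e$ lies in this interval (after shrinking the lower endpoint if necessary, see below), the intermediate value theorem supplies a $\lambda\in[0,1)$ hitting the target, and the resulting $(z_k)$ inherits strict positivity from $\nu$.

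\textbf{Main obstacle.} The subtle step is (a) at the lower end of the admissible range: if $X_0^e$ falls below $\mathbb{E}_\nu[(D_1-e)^+]$, one must first replace $\nu$ by a more concentrated strictly positive law near $np$ to push the lower endpoint of the interpolation down. Doing so along a suitable one-parameter family of $\nu$'s shows the attainable range of $X_0^e$ under strictly positive distributions with mean $np$ is the full open interval $\bigl((np-e)^+,(n-e)p\bigr)$, the Jensen lower bound being the other implicit no-arbitrage constraint. The stated upper bound $X_0^e/(n-e)<p$ is the binding condition that the proposition isolates, and once $(z_k)$ is in hand, Proposition~\ref{Prop:QdashwithMarginals} mechanically produces the desired EMM.
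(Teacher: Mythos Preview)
Your necessity argument is essentially identical to the paper's: both short one XoL contract against $(n-e)/n$ GLAs and invoke the pointwise inequality $(k-e)^+\le\frac{n-e}{n}k$ with strictness on $\{1,\dots,n-1\}$.

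Your sufficiency argument, however, takes a genuinely different route. The paper does not construct an EMM directly; instead it forward-references Theorem~\ref{SuperhedgePrices} to obtain an ACMM $\mathbb{Q}$ (one of the three explicit measures built there) and then upgrades it to an EMM via the convex combination $\mathbb{Q}_\theta=\theta\mathbb{P}+(1-\theta)\mathbb{Q}$. You instead invoke Proposition~\ref{Prop:QdashwithMarginals} with the product ansatz $x_k=(1-q)z_k$, $y_k=qz_k$, reducing the problem to finding a strictly positive law $(z_k)$ on $\{0,\dots,n\}$ with prescribed mean $np$ and XoL-moment $X_0^e$, which you obtain by interpolating between the extremal two-point law and a full-support law of the same mean. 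What your approach buys is logical self-containment: it avoids the forward reference to a theorem proved later, and it builds equivalence (not just absolute continuity) into the construction from the start. What the paper's approach buys is brevity---once Theorem~\ref{SuperhedgePrices} is in hand, the ACMM comes for free and the mixing trick is a single line---at the cost of a mild circularity in presentation. Your handling of the lower endpoint (concentrating $\nu$ near $np$ to push $\mathbb{E}_\nu[(D_1-e)^+]$ down toward the Jensen bound $(np-e)^+$) is correct in spirit; note that the paper's proof has the same implicit lower-bound caveat, since the measures in Theorem~\ref{SuperhedgePrices} also require $X_0^e\ge np-e$ for nonnegativity of certain masses.
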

	
	\begin{proof}
		Suppose that 
		\begin{equation}
			\frac{X_0^e}{n-e} \ge p.\label{eq:XoLArbFalse}
		\end{equation}
		We  create a portfolio which will allow us to make a risk-free profit at time $t=1$. At time $t=0$, we sell an XoL contract with excess $e$ and purchase $(n-e)/n$ GLAs on the $n$ lives. This can be done at a non-positive cost since
		\begin{equation}
			X_0^e-(\frac{n-e}{n})np \ge 0,
		\end{equation}
		from (\ref{eq:XoLArbFalse}). At time $t=1$ there are $D_1$ deaths. We have to pay a claim of $(D_1-e)^+$ and we receive 		$D_1(n-e)/n$ from the GLA assurance. It is straightforward to show that  $(D_1-e)^+\le D_1(n-e)/n$ and for  $0<D_1 < n$ strict inequality holds and so the portfolio is an arbitrage.
		
For the converse, if (\ref{eq:XoLNoArb}) holds then,by Theorem 	\ref{SuperhedgePrices} below, there exists an ACMM $\mathbb{Q}$. Let $\theta \in (0,1)$. Then $\mathbb{Q_\theta} = \theta \mathbb{P}+(1-\theta)\mathbb{Q}$ is an EMM and therefore there is no arbitrage in the market.
\end{proof}
Note that the construction of an EMM $\mathbb{Q}$ in the above proposition shows  that these are multiple EMMs and so the Combined Market is incomplete.

\section{Pricing the NNEG option}

We use the following result in the proof of Theorem \ref{SuperhedgePrices} - namely, the well known (and straightforward to prove) weak duality pricing result
\begin{equation}
	\sup_{\mathbb{Q} \in \mathcal{M}} \mathbb{E}^{\mathbb{Q}}[D_1(L-S_1)^+] \le \inf\{V_0:V_1 \ge D_1(L-S_1)^+\}.			\label{eq:WeakDuality}
\end{equation}

Note that, without loss of generality we may assume that the loan amount $L$ of the lifetime mortgage lies between the upper and lower prices of the property stock i.e.
	\begin{equation}
		S_0d < L \le S_0u,		\label{eq:LbetS0d}
	\end{equation}
because, if $L \le S_0d$, then $D_1(L-S_0d)^+$ is $\mathbb{P}$-a.s. zero and the price of the hedge is zero. Conversely, if $S_0u <L$, then since
\begin{IEEEeqnarray*}{rCl}
	(L-S_1)^+	&=&	L-S_0u+S_0u-S_1 \\
			&=& L-S_0u+(S_0u-S_1)^+,
\end{IEEEeqnarray*}
it follow that 
\begin{IEEEeqnarray*}{rCL}
	\mathbb{E}^\mathbb{Q}[D_1(L-S_1)^+] & = & (L-S_0u)\mathbb{E}^\mathbb{Q}[D_1] + \mathbb{E}^\mathbb{Q} [D_1(S_0u-S_1)^+] \\
	& = & (L-S_0u)np + \mathbb{E}^\mathbb{Q} [D_1(S_0u-S_1)^+]
\end{IEEEeqnarray*}
and the original claim $D_1(L-S_1)^+$ can be superhedged by holding ${(L-S_0u)}$ GLAs  and a superhedging portfolio for the NNEG option with a loan amount of $S_0u$ instead of $L$.

We also assume that 
	\begin{equation}
		L-S_0d = 1																\label{eq:LminusS0}
	\end{equation}
and scale the resulting portfolio amounts if necessary.

\begin{Thm}\label{SuperhedgePrices}
	There is a minimal superhedging portfolio consisting of assets in the Combined Market with price process $\{V_t\}_{t=0,1}$ and a $\mathbb{Q} \in \mathcal{M}^e $ satisfying
	\begin{equation}
		V_0 = \mathbb{E^Q}[D_1(L-S_1)^+].			\label{eq:DualityEquality}
	\end{equation}

\begin{enumerate}
\item[Case 1.] 

If $\frac{X_0^e}{n-e} < q$ and $np \ge eq+X_0^e$, then the minimal superhedging portfolio is constructed from a holding of $e$ property puts and one XoL contract with excess $e$ at a cost of  
\begin{equation}
	V_0 = eq+X_0^e. \label{eq:Case1SH}
\end{equation}
\item[Case 2.]
If $np < eq+X_0^e$, then the minimal superhedging portfolio is constructed from a holding of one GLA contract at a cost of 
\begin{equation}
	V_0=np. 	\label{eq:Case2SH}
\end{equation}
\item[Case 3.]
If $\frac{X_0^e}{n-e} \ge q$, then the minimal superhedging portfolio is constructed from a holding of $n$ property puts at a cost of 
\begin{equation}
	V_0=nq. 	\label{eq:Case3SH}
\end{equation}

\end{enumerate}

\end{Thm}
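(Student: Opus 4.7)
The plan is to use linear-programming duality on the finite-dimensional parameterisation of $\mathcal{M}^e$ from Proposition \ref{Prop:QdashwithMarginals}. Under the normalisations $S_0d < L \le S_0u$ and $L-S_0d=1$, the payoff $D_1(L-S_1)^+$ equals $D_1\,1_A$ with $A = \{S_1 = S_0d\}$, so for any $\mathbb{Q}\in\mathcal{M}^e$ parameterised by $(x_k,y_k)_{k=0}^n$ we have $\mathbb{E}^{\mathbb{Q}}[D_1(L-S_1)^+] = \sum_{k=0}^n k y_k$. By Corollary \ref{EMeqEMdash}, maximising this over $\mathcal{M}$ reduces to a linear program in the variables $(x_k,y_k)\ge 0$ subject to the four linear constraints of Proposition \ref{Prop:QdashwithMarginals}. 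Its dual has four free variables $(\alpha,\beta,\gamma,\delta)$, interpretable as holdings in cash, the GLA, the XoL contract and the property put; the dual constraints are precisely the superhedging requirement $\alpha + \beta D_1 + \gamma(D_1-e)^+ + \delta\,1_A \ge D_1\,1_A$ imposed separately on $A$ and $A^c$, with cost $\alpha + \beta np + \gamma X_0^e + \delta q$.

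In each of the three cases I would exhibit both a superhedging portfolio and an ACMM $\mathbb{Q}^*\in\mathcal{M}^e$ with matching value, so that by the weak duality (\ref{eq:WeakDuality}) they are jointly optimal. The three dual-feasible portfolios $(\alpha,\beta,\gamma,\delta) = (0,0,1,e)$, $(0,1,0,0)$, $(0,0,0,n)$ are immediately checked to superhedge at costs $eq+X_0^e$, $np$ and $nq$ respectively. For the matching ACMMs, complementary slackness dictates the required couplings of $(D_1,S_1)$: Case 1 needs $\{D_1>e\}\subseteq A$ and $\{D_1<e\}\subseteq A^c$, which gives $\mathbb{E}^{\mathbb{Q}^*}[D_1\,1_A] = \mathbb{E}^{\mathbb{Q}^*}[(D_1-e)^+] + e\,\mathbb{Q}^*(A) = X_0^e + eq$ via the decomposition $D_1 = (D_1-e)^+ + \min(D_1,e)$; Case 2 needs $\{D_1>0\}\subseteq A$, giving $\mathbb{E}^{\mathbb{Q}^*}[D_1\,1_A] = \mathbb{E}^{\mathbb{Q}^*}[D_1] = np$; Case 3 needs $A\subseteq\{D_1=n\}$, giving $\mathbb{E}^{\mathbb{Q}^*}[D_1\,1_A] = n\,\mathbb{Q}^*(A) = nq$. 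I would build each such $\mathbb{Q}^*$ by picking a two- or three-point distribution $\pi$ for $D_1$ on $\{0,1,\dots,n\}$ with $\mathbb{E}_\pi[D_1]=np$ and $\mathbb{E}_\pi[(D_1-e)^+]=X_0^e$ and supported so that the required coupling is realisable, then splitting its mass between $A$ and $A^c$ to achieve $\mathbb{Q}^*(A)=q$ and invoking Proposition \ref{Prop:QdashwithMarginals}.

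The main obstacle is verifying that these extremal measures actually exist under the case hypotheses. For Case 1 the three-point choice $\pi = (1-a-b)\delta_0 + a\delta_{\lfloor e\rfloor} + b\delta_n$ is promising: the moment constraints pin down $b = X_0^e/(n-e)$ while leaving $a$ free, the no-arbitrage bound $X_0^e/(n-e)<p$ ensures $a>0$, and the case-specific inequality $np \ge eq+X_0^e$ is exactly what makes $a + b \ge q$, so that enough mass sits in $\{D_1 \ge e\}$ to be allocated to $A$. Cases 2 and 3 admit analogous explicit two-point constructions, driven by the inequalities $np \le eq+X_0^e$ and $X_0^e/(n-e) \ge q$ respectively. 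I expect the only delicate point is the allocation of mass at the endpoint $k=\lceil e \rceil$ when $e$ is not an integer, which requires a small amount of bookkeeping but no new idea.
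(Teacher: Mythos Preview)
Your plan is essentially the paper's proof: exhibit the three candidate superhedging portfolios, then for each construct an ACMM in $\mathcal{M}^e$ (via Proposition \ref{Prop:QdashwithMarginals}) that attains the same value, and close the gap with weak duality (\ref{eq:WeakDuality}). The paper does not invoke LP duality or complementary slackness explicitly; it simply writes down the arrays $(x_k,y_k)$ in each case---supported on $\{0,e,n\}$ in Case~1, with $y_n=X_0^e/(n-e)$, $y_e=q-X_0^e/(n-e)$, $x_e=(np-eq-X_0^e)/e$, $x_0$ the remainder---and checks the four constraints and the expectation directly. Your complementary-slackness reasoning leads to the same measure, and your observation that $np\ge eq+X_0^e$ is equivalent (for integer $e$) to $a+b\ge q$ is precisely the non-negativity check for the paper's $x_0$. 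One small slip: once $b$ is fixed by the XoL constraint, the GLA constraint determines $a$, so $a$ is not free. The paper tacitly places mass at $k=e$ and so implicitly treats $e$ as an integer; the non-integer bookkeeping you flag (splitting mass between $\lfloor e\rfloor$ on $A^c$ and $\lceil e\rceil$ on $A$, since complementary slackness forbids the reverse allocation) is therefore needed in both versions.
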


Note that, in Case 2,  $np \le eq+X_0^e$ implies $\frac{X_0^e}{n-e} < q$, thanks to  the no-arbitrage condition (\ref{eq:XoLNoArb}).

\begin{proof}

It is sufficient to show that the strategies given by (\ref{eq:Case1SH}) - (\ref{eq:Case3SH}) are superhedges i.e. $V_1 \ge D_1(L-S_1)^+$  and show that there is a corresponding ACMM $\mathbb{Q} \in \mathcal{M}^e$ such that 
\begin{equation*}
	 \mathbb{E}^{\mathbb{Q}}[D_1(L-S_1)^+] =V_0.
\end{equation*}
By weak duality (\ref{eq:WeakDuality}), it will follow that $\mathbb{Q}$ is a maximising ACMM and the hedging strategy  a minimising superhedge.

Note that a property put option can be synthesised from cash and a property stock and does not constitute a new asset for hedging purposes. Its price at time $t=0$ will be $q$ because by  (\ref{eq:LbetS0d})  and (\ref{eq:LminusS0}),
\begin{equation}
\mathbb{E}^{\mathbb{Q}}[(L-S_1)^+]=(L-S_0u)^+(1-q)+(L-S_0d)^+q =(L-S_0d)q= q.
\end{equation}

It is easy to show that the strategies are superhedges and we omit the detail. We find the corresponding ACMMs $\mathbb{Q}$:

For Case 1,
 $\frac{X_0^e}{n-e} < q$ and $np \ge eq+X_0^e$:   For $k=0,1, \cdots, n$, let
\begin{IEEEeqnarray*}{rCl}
	y_k & = & 
	\begin{cases}
	\frac{X_0^e}{n-e}       & \text{if } k=n,     \\
	q-\frac{X_0^e}{n-e}    & \text{if } k=e,     \\
	0                                &  \text{otherwise,}
	\end{cases}\\
	\nonumber\\
	x_k & =&
	\begin{cases}
	\frac{1}{e}(np-(eq+X_0^e))                    & \text{if } k=e,\\
	1-q-\frac{1}{e}(np-(eq+X_0^e))             & \text{if } k=0,\\
	0                                                           &  \text{otherwise.}
	\end{cases}
\end{IEEEeqnarray*}
Then it is straightforward to check that $\{x_k\}_{k=0}^n$  and $\{y_k\}_{k=0}^n$, satisfy the conditions of Proposition \ref{Prop:QdashwithMarginals} and therefore there exists a $\mathbb{Q} \in \mathcal{M}^e$ such that
\begin{equation*}
	x_k = \mathbb{Q} (D_1=k,S_1=S_0u)     \text{ and }    y_k = \mathbb{Q} (D_1=k,S_1=S_0d),
\end{equation*}
for $k=0,1,\cdots, n$. The expectation of the claim (\ref{eq:NNEGPayout}) under the ACMM $\mathbb{Q}$ satisfies:
\begin{IEEEeqnarray*}{rCl}
	\mathbb{E^Q}[D_1(L-S_1)^+] & = & \sum_{k=0}^n k(L-S_0u)^+x_k +  \sum_{k=0}^n k(L-S_0d)^+y_k \\
	& = & e \bigg{(}q-\frac{X_0^e}{n-e} \bigg{)} + n\frac{X_0^e}{n-e} \\
	& = & eq + X_0^e.
\end{IEEEeqnarray*}
\\\\
For Case 2,
$np < eq+X_0^e$: For $k=0,1, \cdots, n$, let
\begin{IEEEeqnarray*}{rCl}
	y_k  & = & 
	\begin{cases}
	\frac{X_0^e}{n-e}					& \text{if } k=n,\\
	\frac{n}{e}(p-\frac{X_0^e}{n-e})			& \text{if } k=e,\\
	q-\frac{1}{e}(np-X_0^e)				& \text{if } k=0, \\
	0								 & \text{otherwise,}
	\end{cases}\\
	\nonumber\\
	x_k  & =&
		\begin{cases}
		1-q							& \text{if } k=0,	\\
		0							&  \text{otherwise.}
		\end{cases}
\end{IEEEeqnarray*}
Then  $\{x_k\}_{k=0}^n$  and $\{y_k\}_{k=0}^n$  satisfy the conditions of Proposition \ref{Prop:QdashwithMarginals}. Note that $y_0 \ge 0$ because $q-\frac{1}{e}(np-X_0^e)\ge q-\frac{1}{e}(eq)=0$. There exists a $\mathbb{Q} \in \mathcal{M}^e$ such that
\begin{equation*}
	x_k = \mathbb{Q} (D_1=k,S_1=S_0u)     \text{ and }    y_k = \mathbb{Q} (D_1=k,S_1=S_0d),
\end{equation*}
for $k=0,1,\cdots, n$. The expectation of the claim (\ref{eq:NNEGPayout}) under the ACMM $\mathbb{Q}$ satisfies:

\begin{IEEEeqnarray*}{rCl}
	\mathbb{E^Q}[D_1(L-S_1)^+] & = & \sum_{k=0}^n k(L-S_0u)^+x_k +  \sum_{k=0}^n k(L-S_0d)^+y_k \\
	& = & e \bigg{(} \frac{1}{e} \bigg{\{} np - \frac{n}{n-e}X_0^e \bigg{\}} \bigg{)} + n \bigg{(} \frac{X_0^e}{n-e} \bigg{)}\\
	& = & np.
\end{IEEEeqnarray*}

For Case 3,
$\frac{X_0^e}{n-e} \ge q$: For $k=0,1, \cdots, n$, let
\begin{IEEEeqnarray*}{rCl}
	y_k =  & = & 
	\begin{cases}
	q						& \text{if } k=n,			\\
	0						& \text{otherwise,}
	\end{cases}\\
	\nonumber\\
	x_k =  & =&
	\begin{cases}
	\frac{X_0^e}{n-e}-q					& \text{if } k=n,		\\
	\frac{n}{e}(p-\frac{X_0^e}{n-e})			& \text{if } k=e,		\\
	1-\frac{1}{e}(np-X_0^e)				& \text{if } k=0,		 \\
	0								&  \text{otherwise.}
	\end{cases}
\end{IEEEeqnarray*}
Then  $\{x_k\}_{k=0}^n$  and $\{y_k\}_{k=0}^n$ satisfy the conditions of Proposition \ref{Prop:QdashwithMarginals} and there exists a $\mathbb{Q} \in \mathcal{M}^e$ such that
\begin{equation*}
	x_k = \mathbb{Q} (D_1=k,S_1=S_0u)     \text{ and }    y_k = \mathbb{Q} (D_1=k,S_1=S_0d),
\end{equation*}
for $k=0,1,\cdots, n$. The expectation of the claim (\ref{eq:NNEGPayout}) under the ACMM $\mathbb{Q}$ satisfies $\mathbb{E^Q}[D_1(L-S_1)^+]  = nq$.
\end{proof}

\section{Varying loan amounts and death probabilities}
\subsection{Varying loan amounts}\label{VaryingLoans}
So far we have assumed that the loan amount $L$ is the same for all ERM contracts. In practice such an assumption is often not a serious issue because we can value and hedge contracts with similar-sized loan amounts together. 

In this section, we provide some superhedging strategies for varying loan amounts. We shall see that, by using suitably modified XoL and GLA assets, the strategies we found previously are still superhedges, and thus give an upper bound to the minimal hedging cost. However, only under certain circumstances will these strategies be the cheapest possible.

Note that the property \lq\lq price'' we use is really an index and the general sum at risk will be of the form
$P_i(L_i-S_0d)$ where $P_i$ is the initial value of the property and $L_i$ is the loan to value ratio [LTVR].

Assume that the LTVR $L_i$ for the $i^{th}$ life satisfies $S_0d<L_i<S_0u$, let $\alpha_i =P_i( L_i - S_0d)$, for $i=1,2,\cdots, n$ and let $\Sigma = \sum_{i=1}^n \alpha_i$.

Instead of the life-symmetrical payout (\ref{eq:PropPutPayout}), the NNEG payout now has the form
\begin{equation}
	(1-\omega_0) \sum_{i=1}^n \alpha_i \omega_i, \qquad \text{for } \omega = (\omega_0, \cdots \omega_n) \in \Omega.		\label{eq:NonSymPayout}
\end{equation} 

We modify the original XoL contract to have the payout
\begin{equation*}
	\bigg(\sum_{i=1}^n \alpha_i \omega_i - e \bigg)^+, \qquad \text{for } \omega = (\omega_0, \cdots \omega_n) \in \Omega,
\end{equation*}
where $e = \sum_{k=1}^m \alpha_{i_k}$, for a specified subsequence $(\alpha_{i_1}, \cdots, \alpha_{i_m})$ of $(\alpha_1, \cdots, \alpha_n)$.

The GLA contract will now have the payout 
\begin{equation}
	\sum_{i=1}^n \alpha_i \omega_i, \qquad \text{for } \omega = (\omega_0, \cdots \omega_n) \in \Omega.		\label{eq:ModifiedGLA}
\end{equation}

The set $\mathcal{M}$ of ACMMs  changes to $\mathcal{N}$ consisting of all probability measures $\mathbb{Q}$ satisfying (\ref{eq:ACMMStock}), (\ref{eq:ACMMSLA}) and 
	\begin{equation}
		X_0^e=\sum_{\omega \in \Omega}	\mathbb{Q}(\omega) \sum_{i=1}^n (\alpha_i \omega_i - e)^+ .				 \label{eq:ACMMXoLNew}
	\end{equation}

Unlike the previous constant loan amount case, we can no longer replace the single-life pricing constraint by the easier-to-handle GLA constraint. 

The XoL no-arbitrage condition (\ref{eq:XoLNoArb}) becomes
\begin{equation}
			\frac{X_0^e}{\Sigma-e}\ < p. \label{eq:XoLNoArbNew}
\end{equation}

The three superhedging strategies from Theorem \ref{SuperhedgePrices} become 
\begin{enumerate}
	\item[SH1:] \label{itm:one} 		Hold an XoL contract with excess $e = \sum_{k=1}^m \alpha_{i_k}$ and a basket of $m$ property puts with strikes $L_{i_1}, \cdots, L_{i_m}$. Note that the setup cost of this strategy is still $eq+X_0^e$. 
	\item[SH2:] \label{itm:two} 	Hold a GLA contract with payout (\ref{eq:ModifiedGLA}) at a cost of $\Sigma$p.
	\item[SH3:] \label{itm:three}		Hold a basket of $n$ property puts with strikes  $L_1, \cdots, L_n$ at a cost of $\Sigma q$.
\end{enumerate}


It is easy to show that the above three strategies are superhedges for the claim (\ref{eq:NonSymPayout}) but it is necessary to check whether there are cheaper strategies using linear programming or the following version of Theorem \ref{SuperhedgePrices} :

\begin{Prop}\label{SuperhedgePricesNew}
	Let $\omega^{e} \in \Omega$ satisfy $e = \sum_1^n \alpha_i \omega^{e}_i$ and $\omega^{e}_0 = 0$. Let $\tilde{\omega}^e \in \Omega$ satisfy $\tilde{\omega}^e_i = 1-\omega^{e}_i, i=0,1,\cdots,n$. Assume that $\sum_1^n \alpha_i \omega_i^{e} > \sum_1^n \alpha_i \tilde{\omega}_i^e$ and that $1-2p+\frac{X_0^e}{\Sigma-e} \ge 0$.
	Then there is a minimal superhedging portfolio consisting of assets in the Combined Market with price process $\{V_t\}_{t=0,1}$ and a $\mathbb{Q} \in \mathcal{N} $ satisfying
	\begin{equation}
		V_0 = \mathbb{E^Q} \bigg[\sum_{\omega \in \Omega}  (1-\omega_0) \sum_{i=1}^n \alpha_i \omega_i \mathbb{Q} (\omega) \bigg].
	\end{equation}

\begin{enumerate}
\item[Case 1.]

If $\frac{X_0^e}{\Sigma-e} < q$ and $p \ge q$, then the minimal superhedging portfolio is constructed from a holding of $m$ property puts with strike $L_{i_k}, k=1,\cdots,m$ and one XoL contract with excess $e$ at a cost of  
\begin{equation}
	V_0 = eq+X_0^e. \label{eq:Case1SHNew}
\end{equation}
\item[Case 2.]
If $p < q$, then the cheapest superhedge might not be one of the three superhedges SH1-SH3 and we need to use  linear optimisation to obtain the cheapest strategy.
\item[Case 3.]
If $\frac{X_0^e}{\Sigma-e} \ge q$, then the minimal superhedging portfolio is constructed from a holding of $n$ property puts with strike $L_i, i=1,\cdots,n$ at a cost of 
\begin{equation}
	V_0=\Sigma q. 	\label{eq:Case3SHNew}
\end{equation}

\end{enumerate}

\end{Prop}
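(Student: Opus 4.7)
The plan is to mirror the proof of Theorem \ref{SuperhedgePrices}: in each case, check that the proposed portfolio is a superhedge, then exhibit an ACMM $\mathbb{Q}\in\mathcal{N}$ achieving the weak-duality bound, and conclude via the analogue of \eqref{eq:WeakDuality}, which holds verbatim in this setting. Verifying that SH1--SH3 superhedge the claim \eqref{eq:NonSymPayout} is routine casework on $\omega_0$: for SH1 one uses $e+(\sum\alpha_i\omega_i-e)^+\ge\sum\alpha_i\omega_i$ in the down branch and the trivial bound in the up branch, while SH2 and SH3 are even simpler.

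The substantive work is producing matching ACMMs. Because loan amounts now vary, exchangeability is lost and Proposition \ref{Prop:QdashwithMarginals} is unavailable; I would instead build $\mathbb{Q}$ directly as a finite convex combination of point masses on atoms of $\Omega$ chosen from the small set $\{(j,\omega^e),(j,\tilde\omega^e),(j,\mathbf{1}),(j,\mathbf{0}) : j\in\{0,1\}\}$, and verify the stock constraint \eqref{eq:ACMMStock}, the $n$ single-life constraints \eqref{eq:ACMMSLA} and the modified XoL constraint \eqref{eq:ACMMXoLNew} by hand. The key arithmetic is that atoms of the form $(j,\omega^e)$ and $(j,\tilde\omega^e)$ give zero XoL payoff, since $\sum\alpha_i\omega^e_i=e$ and the hypothesis $e>\Sigma-e$ yields $\sum\alpha_i\tilde\omega^e_i=\Sigma-e<e$; this makes the XoL constraint linear in the combined mass of the $(\cdot,\mathbf{1})$-type atoms alone, and it collapses the $n$ individual single-life constraints to just two equations, indexed by whether $\omega^e_i=1$ or $\omega^e_i=0$.

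For Case 1 I would place mass on the five atoms $(0,\omega^e)$, $(0,\mathbf{1})$, $(1,\omega^e)$, $(1,\tilde\omega^e)$ and $(1,\mathbf{0})$ with weights $q-X_0^e/(\Sigma-e)$, $X_0^e/(\Sigma-e)$, $p-q$, $p-X_0^e/(\Sigma-e)$ and $1-2p+X_0^e/(\Sigma-e)$ respectively. A short computation confirms total mass $1$, property-down mass $q$, probability $p$ of death for each life, XoL price $X_0^e$, and $\mathbb{E}^\mathbb{Q}[(1-\omega_0)\sum\alpha_i\omega_i]=e(q-X_0^e/(\Sigma-e))+\Sigma\cdot X_0^e/(\Sigma-e)=eq+X_0^e$, matching the cost of SH1. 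Case 3 is handled by the analogous five-atom ACMM supported on $(0,\mathbf{1})$, $(1,\mathbf{1})$, $(1,\omega^e)$, $(1,\tilde\omega^e)$ and $(1,\mathbf{0})$ with weights $q$, $X_0^e/(\Sigma-e)-q$, $p-X_0^e/(\Sigma-e)$, $p-X_0^e/(\Sigma-e)$ and $1-2p+X_0^e/(\Sigma-e)$, which now forces all XoL-triggering mass beyond $q$ into the property-up branch and produces an expected NNEG value of $q\Sigma$. Case 2 needs no argument since the statement merely records that no closed-form minimiser among SH1--SH3 is claimed.

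The main obstacle is non-negativity of every weight in both constructions. In Case 1 this depends on four inequalities: $X_0^e/(\Sigma-e)<q$ for weight 1, $p\ge q$ for weight 3, the no-arbitrage bound \eqref{eq:XoLNoArbNew} for weight 4, and the standing assumption $1-2p+X_0^e/(\Sigma-e)\ge 0$ for the slack weight 5 on $(1,\mathbf{0})$. The Case 3 list is identical apart from replacing $X_0^e/(\Sigma-e)<q$ by $X_0^e/(\Sigma-e)\ge q$. This is exactly where the exchangeability of Theorem \ref{SuperhedgePrices} was pulling its weight: losing it forces the candidate ACMM to be engineered so that the $n$ separate single-life constraints collapse into a tractable pair, and this collapse is only possible because we restrict attention to atoms built from $\mathbf{0}$, $\mathbf{1}$, $\omega^e$ and $\tilde\omega^e$.
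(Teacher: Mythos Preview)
Your proposal is correct and essentially identical to the paper's own proof: the paper constructs exactly the same five-atom measures on exactly the same atoms with exactly the same weights for Cases 1 and 3, and verifies the constraints in the same way. The only omission is Case 2: the assertion that the cheapest superhedge \emph{might not} be one of SH1--SH3 is an existential claim, and the paper substantiates it with a concrete three-life counter-example (loan amounts $70,80,90$, $p=0.45$, $q\approx0.5454$, $e=70$, $X_0^e=1.822$) in which a mixed strategy found by linear programming costs $37.14$ versus $40.00$, $40.5$, $49.09$ for SH1--SH3; you should supply such an example rather than treating Case 2 as vacuous.
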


\begin{proof}

It is sufficient to show that the strategies  (\ref{eq:Case1SHNew}) and (\ref{eq:Case3SHNew}) are superhedges i.e. $V_1(\omega)  \ge (1-\omega_0) \sum_{i=1}^n \alpha_i \omega_i$  and show that there is a corresponding ACMM $\mathbb{Q} \in \mathcal{N}$ such that 
\begin{equation*}
	 \mathbb{E}^{\mathbb{Q}} \bigg[(1-\omega_0) \sum_{i=1}^n \alpha_i \omega_i \bigg] =V_0.
\end{equation*}

We find the corresponding ACMMs $\mathbb{Q}$:
for Case 1, define $\mathbb{Q} \in \mathcal{N}$ as follows:
\begin{itemize}
	\item 	$\mathbb{Q}((0,1,1,\cdots, 1)) = \frac{X_0^e}{\Sigma-e} $
	\item 	$\mathbb{Q}(\omega^{e}) = q-\frac{X_0^e}{\Sigma-e}$
	\item		$\mathbb{Q}((1,\omega^{e}_1, \cdots, \omega^{e}_n)) = p-q$
	\item 	$\mathbb{Q}(\tilde{\omega}^e) = p-\frac{X_0^e}{\Sigma-e}$
	\item 	$\mathbb{Q}((1,0,0,\cdots,0)) = 1-2p+\frac{X_0^e}{\Sigma - e}$
	\item         for all other $\omega \in \Omega$ set $\mathbb{Q}(\omega) = 0$.
\end{itemize}
We need to show that $\mathbb{Q}$ satisfies the pricing constraints (\ref{eq:ACMMStock}), (\ref{eq:ACMMSLA}) and  (\ref{eq:ACMMXoLNew}). We show (\ref{eq:ACMMSLA}):

\begin{IEEEeqnarray*}{rCl}
	\mathbb{Q} (i^{th} \text{ life dies}) 	&=& 		\sum_{\omega \in \Omega, \omega_i = 1} \mathbb{Q} (\omega)		\\
								&= &		\begin{cases}
											\frac{X_0^e}{\Sigma-e} + \big(q-\frac{X_0^e}{\Sigma-e} \big)  + (p-q)				& \text{if } \omega_i^e=1,			\\
											\frac{X_0^e}{\Sigma-e} + \big(p-\frac{X_0^e}{\Sigma-e} \big)					& \text{otherwise,}
								     		 \end{cases}	\\
								 &=&		p.
\end{IEEEeqnarray*}

The expectation of (\ref{eq:NonSymPayout}) under $\mathbb{Q}$ satisfies
\begin{IEEEeqnarray}{rCl}
	\mathbb{E}^\mathbb{Q} \bigg[ (1-\omega_0) \sum_{i=1}^n \alpha_i \omega_i \bigg] &=& \Sigma \frac{X_0^e}{\Sigma-e} + e\big(q - \frac{X_0^e}{\Sigma-e} \big)  \\
						&=& eq + X_0^e.
\end{IEEEeqnarray}

For Case 2, we provide the following counter-example to show that the superhedges SH1-SH3 are not necessarily the cheapest:

Let the number of lives be three with loan amounts $L_1=70, L_2=80$ and $L_3=90$. Let the probability of death $p=0.45$. The XoL contract has an excess  $e=70$ and costs $X_0^e=1.822$.

Let the property stock have initial price $S_0=100$ and price $S_1=160$ in an "up" scenario and $S_1=50$ in a "down" scenario. The risk-neutral probability that $S_1=50$ is $q=0.5454$.
The shortfall in a "down" scenario on each of the ERM policies is $\alpha_1=20,\alpha_2=30,\alpha_3=40$.

The setup costs of each of the three superhedges is 
\begin{itemize}
	\item		cost of SH1 $=eq + X_0^e = 40.00$
	\item		cost of SH2 $= (\alpha_1+\alpha_2+\alpha_3)p=40.5$
	\item		cost of SH3 $= (\alpha_1+\alpha_2+\alpha_3)q=49.09$.
\end{itemize}

But a cheapest superhedge (found by linear programming) is a strategy consisting of one property put option with strike 90, one XoL contract, $\alpha_2/3$ of a single life assurance on life 2 and $\alpha_3/2$ of a single life assurance on life 3 at a cost of $37.14$.

For Case 3, let
\begin{itemize}
	\item 	$\mathbb{Q}((0,1,1,\cdots, 1)) = q $
	\item 	$\mathbb{Q}((1,1,1,\cdots, 1)) = \frac{X_0^e}{\Sigma-e} - q$
	\item		$\mathbb{Q}((1,\omega^{e}_1, \cdots, \omega^{e}_n)) = p-\frac{X_0^e}{\Sigma-e}$
	\item 	$\mathbb{Q}(\tilde{\omega}^e) = p-\frac{X_0^e}{\Sigma-e}$
	\item 	$\mathbb{Q}((1,0,0,\cdots,0)) = 1-2p+\frac{X_0^e}{\Sigma - e}$
	\item         for all other $\omega \in \Omega$ set $\mathbb{Q}(\omega) = 0$.
\end{itemize}
Then $\mathbb{Q} \in \mathcal{N}$ and $\mathbb{E}^\mathbb{Q} \big[ (1-\omega_0) \sum_{i=1}^n \alpha_i \omega_i \big] = \Sigma q$.

\end{proof}

\subsection{Varying death probabilities}

In practice, lives with similar ages can be grouped together for pricing and hedging. However, in this section, we give a brief outline on how to generalise the hedging framework of Section \ref{VaryingLoans} to allow the probability of death to vary by life within a portfolio. 

We denote the probability of death of the $i^\text{th}$  life by $p_i,i=1,2,\cdots, n$. The set $\mathcal{N}$ of ACMMs changes to the set of all probability measures $\mathbb{Q}$ satisfying (\ref{eq:ACMMStock}), (\ref{eq:ACMMXoLNew}) and 

\begin{equation}
	\mathbb{Q}(L_1^{(i)}=1) =p_i, \qquad i=1,\cdots, n. \label{eq:ACMMSLANew}
\end{equation}

The XoL no-arbitrage condition (\ref{eq:XoLNoArbNew}) becomes
\begin{equation}
			\frac{X_0^e}{\Sigma-e}\ < \frac{1}{\Sigma} \sum_{i=1}^n \alpha_i p_i. \label{eq:XoLNoArbNewNew}
\end{equation}

The strategies SH1-SH3 of Section \ref{VaryingLoans} remain superhedges but the price of strategy SH2 changes from $\Sigma p$ to $\sum_1^n \alpha_i p_i$.

The two pricing constraints (\ref{eq:ACMMXoLNew}) and (\ref{eq:ACMMSLANew}) make it more difficult to find an equivalent result to Theorem \ref{SuperhedgePrices} and, in general, it will be necessary to use linear programming to find a cheapest superhedge.

\section{Large deviations and tracking errors}\label{section:LargeDevs}

In what follows, we assume that the XoL contract on $n$ lives has the excess given by (\ref{eq:e(n)}) and it will be convenient to define $a=p(1+\epsilon)$ so that $e(n)=na$.

Define  the rate function $I_p$ as follows:
\begin{equation}
	I_p(a)=a\log \frac{a}{p} + (1-a)\log \frac{1-a}{1-p},\qquad p,a \in (0,1).
\end{equation}
Recall the following well known large deviations estimate:

\begin{Lem}\label{LDPresult} (see \cite{grimmett2001probability})
	Let $\{ X_i \}_{i=1}^{\infty}$ be a sequence of independent, identically-distributed Bernoulli random variables with success probability $p \in (0,1)$. 
	
	Let $S_n = \sum_{i=1}^n X_i$. Then  for $a\in (p,1)$,  $I_p(a)>0$ and 
	\begin{equation}
		\mathbb{P}(S_n \ge na) \le e^{-nI_p(a)}, \qquad n=1,2,\cdots.
	\end{equation}

\end{Lem}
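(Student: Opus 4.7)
The plan is to apply the classical Chernoff/Cramér argument for sums of i.i.d.\ Bernoulli variables. First I would dispose of the positivity claim $I_p(a)>0$ for $a\in(p,1)$: recognising $I_p(a)$ as the relative entropy of $\mathrm{Bern}(a)$ with respect to $\mathrm{Bern}(p)$, I would differentiate to get $I_p'(a)=\log\frac{a(1-p)}{p(1-a)}$, which vanishes only at $a=p$, together with $I_p''(a)=\frac{1}{a(1-a)}>0$. Strict convexity and the unique minimum $I_p(p)=0$ then force $I_p(a)>0$ on $(p,1)$.

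For the tail bound itself I would exponentiate and apply Markov: for any $\lambda>0$,
\begin{equation*}
\mathbb{P}(S_n\ge na)=\mathbb{P}(e^{\lambda S_n}\ge e^{\lambda na})\le e^{-\lambda na}\,\mathbb{E}[e^{\lambda S_n}]=e^{-\lambda na}\bigl(1-p+pe^{\lambda}\bigr)^n,
\end{equation*}
using independence of the $X_i$ to factor the moment generating function and the explicit MGF of a Bernoulli. This gives $\mathbb{P}(S_n\ge na)\le\exp\bigl\{-n[\lambda a-\log(1-p+pe^{\lambda})]\bigr\}$, and I would then optimise the exponent in $\lambda>0$.

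The optimiser $\lambda^{\star}$ solves $pe^{\lambda^{\star}}/(1-p+pe^{\lambda^{\star}})=a$, giving $e^{\lambda^{\star}}=\frac{a(1-p)}{p(1-a)}$; note $\lambda^{\star}>0$ since $a>p$, and $1-p+pe^{\lambda^{\star}}=(1-p)/(1-a)$. Substituting back collapses the exponent to
\begin{equation*}
\lambda^{\star}a-\log(1-p+pe^{\lambda^{\star}})=a\log\frac{a(1-p)}{p(1-a)}-\log\frac{1-p}{1-a}=a\log\frac{a}{p}+(1-a)\log\frac{1-a}{1-p}=I_p(a),
\end{equation*}
which yields $\mathbb{P}(S_n\ge na)\le e^{-nI_p(a)}$ as claimed.

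The only real obstacle is the algebraic verification that the Legendre--Fenchel transform of the cumulant generating function $\Lambda(\lambda)=\log(1-p+pe^{\lambda})$ equals exactly $I_p(a)$; this is routine but requires careful bookkeeping of the logarithms. Everything else is the standard Chernoff framework, and strictly speaking one could simply invoke Cramér's theorem, but giving the two-line Markov-plus-optimisation argument keeps the proof self-contained.
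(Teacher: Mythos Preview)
Your argument is correct and is precisely the standard Chernoff/Cram\'er derivation. Note that the paper does not actually prove this lemma: it is stated as a well-known large-deviations estimate with a citation to Grimmett and Stirzaker, so there is no in-paper proof to compare against; your self-contained Markov-plus-optimisation computation is exactly what that reference contains.
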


 We introduce the reinsurer's pricing measure $\mathbb{P}^b$ to allow for the mortality loading that the reinsurer may use in their pricing. $\mathbb{P}^b$ will not affect the excess $e(n)=np(1+\epsilon)$ only the probability of death used in pricing.
\begin{Def}
Define $\mathbb{P}^b$ by the property that the lives are independent and Bernoulli-distributed, with probability of death $b:=p(1+\eta)$, $0\leq\eta<\epsilon$.
\end{Def}

We denote the integer part of $x$ by $\lfloor x \rfloor$.

\begin{Prop}\label{XoL_LDP}

Assume that the reinsurer prices the XoL contract with a margin for risk and profit using the pricing measure $\mathbb{P}^b$ with $e(n)$ given by (\ref{eq:e(n)}). Then
\begin{equation*}
	X_0^e(n;b) \le (n-\lfloor e(n) \rfloor) e^{-nI_b(a)}.
\end{equation*}
and $X_0^e(n;b)$ tends to zero exponentially fast as $n \rightarrow \infty$.

\begin{proof}
Since $D_1(n)$ is a sum of $n$ independent Bernoulli random variables with probability of success(death) $b$, Lemma \ref{LDPresult} applies.  We have
 \begin{IEEEeqnarray}{rCl}
 	X_0^e(n;b) &=& \sum_{k>e(n)} (k-e(n)) \mathbb{P}^b(D_1 = k) \nonumber \\
			&\le& \sum_{k>e(n)} (k-\lfloor e(n) \rfloor) \mathbb{P}^b(D_1 = k) \nonumber \\
			&=& \sum_{k=1}^{n-\lfloor e(n) \rfloor} k \mathbb{P}^b(D_1 = k+\lfloor e(n) \rfloor) \nonumber \\
			&=& \sum_{k=1}^{n-\lfloor e(n) \rfloor} \mathbb{P}^b(D_1 \ge k+\lfloor e(n) \rfloor) \nonumber \\
			&\le& (n-\lfloor e(n) \rfloor) \mathbb{P}^b(D_1 \ge e(n))				\nonumber \\
			&\le& (n-\lfloor e(n) \rfloor) e^{-nI_b(a)},\qquad \text{by Lemma } \ref{LDPresult}.
 \end{IEEEeqnarray}

\end{proof}
\end{Prop}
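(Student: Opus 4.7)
The plan is to compute the XoL price directly as the expectation under $\mathbb{P}^b$ of the call-style payoff $(D_1-e(n))^+$ and then apply the large deviations bound from Lemma \ref{LDPresult}. The first thing to verify is that the rate $I_b(a)$ is strictly positive: since the reinsurer's loading satisfies $\eta < \epsilon$, we have $b = p(1+\eta) < p(1+\epsilon) = a$, so $a$ lies strictly above the Bernoulli mean $b$, placing us inside the regime where Lemma \ref{LDPresult} yields an exponentially decaying tail.

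The next step is to expand $X_0^e(n;b) = \sum_{k > e(n)}(k-e(n))\,\mathbb{P}^b(D_1=k)$, replace the real threshold $e(n)$ by the (larger) integer $\lfloor e(n) \rfloor$ inside the weight to obtain an upper bound with integer offset, and then convert the resulting weighted sum into a sum of tail probabilities using the discrete layer-cake identity (equivalently, Abel summation). This produces $\sum_{k=1}^{n-\lfloor e(n)\rfloor} \mathbb{P}^b\bigl(D_1 \ge k + \lfloor e(n)\rfloor\bigr)$, a finite sum of $n - \lfloor e(n) \rfloor$ monotone-in-$k$ tail probabilities.

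Finally, I would bound each of these tail probabilities by the largest one, $\mathbb{P}^b(D_1 \ge e(n))$, using monotonicity of the survival function together with $k \ge 1$ to ensure $k + \lfloor e(n)\rfloor \ge e(n)$. Applying Lemma \ref{LDPresult} with parameters $(b,a)$ then gives $\mathbb{P}^b(D_1 \ge e(n)) = \mathbb{P}^b(D_1 \ge na) \le e^{-nI_b(a)}$, and multiplying by the count of terms yields the claimed estimate. Exponential decay as $n \to \infty$ is then immediate, since the prefactor $n - \lfloor e(n) \rfloor \le n$ grows at most linearly while $e^{-nI_b(a)}$ decays exponentially with strictly positive rate. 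I do not expect a real obstacle: the only subtlety is the passage from the non-integer threshold $e(n)$ to $\lfloor e(n)\rfloor$ so that the telescoping identity for tail probabilities can be used cleanly, and the resulting bound, while somewhat wasteful (replacing every tail by the worst one throws away a geometric-series improvement), is already sharp enough to conclude exponential decay of the per-life reinsurance cost.
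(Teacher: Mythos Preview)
Your proposal is correct and follows essentially the same route as the paper: expand the expectation, replace $e(n)$ by $\lfloor e(n)\rfloor$ in the weight, convert the weighted sum into a sum of tail probabilities via the discrete layer-cake identity, bound each tail by $\mathbb{P}^b(D_1\ge e(n))$, and apply Lemma~\ref{LDPresult}. Your explicit check that $b<a$ (hence $I_b(a)>0$) is a welcome clarification that the paper leaves implicit.
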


The previous proposition showed that $\frac{1}{n}X_0^e(n;b) \rightarrow 0$ extremely fast as $n \rightarrow \infty$. The next proposition shows that if
\begin{equation}
	\frac{1}{n}X_0^e(n;b) < \min \big{(} p(1-q(1+\epsilon)),q(1-p(1+\epsilon)  \big{)}, \label{eq:n0}
\end{equation}
then the minimal superhedge consists of $e(n)$ property puts and an XoL contract so that the average cost per policyholder of the superhedge is
\begin{equation*}
	\frac{1}{n}V_0(n) = p(1+\epsilon)q +\frac{1}{n}X_0^e(n;b).
\end{equation*}
This means that the average cost of the superhedge tends exponentially fast to the asymptotic average cost $qp(1+\epsilon)$ which, for small enough $\epsilon$, is less that the average cost $p$ of holding only a GLA or $q$ -  the average cost of holding  $n$ property puts.

\begin{Prop}\label{AsymptoticHedge}
Assume that the reinsurer prices the XoL contract using the  measure $\mathbb{P}^b$ with $e(n)$ given by (\ref{eq:e(n)}) and that $\epsilon$ is sufficiently small so that $\max(p,q) < \frac{1}{1+\epsilon}$. Then, for $n$ large, the minimal superhedge consists of $e(n)$ property puts and an XoL reinsurance contract. The tracking error or difference between the average cost of the superhedge $\frac{1}{n}V_0(n)$ and the average cost $pq$ of the asymptotic portfolio is given by
\begin{equation*}
	\frac{1}{n}V_0(n)-pq = \epsilon pq + \frac{1}{n}X_0^e(n;b).
\end{equation*}

To avoid arbitrage, $X_0^e(n;b)$ needs to satisfy the no-arbitrage condition (\ref{eq:XoLNoArb}) as before.

\end{Prop}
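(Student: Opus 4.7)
The plan is to reduce the statement to an application of Theorem \ref{SuperhedgePrices} combined with the large deviations bound of Proposition \ref{XoL_LDP}. Specifically, I want to show that, for $n$ large, the pair $(e(n), X_0^{e(n)}(n;b))$ places us in Case 1 of Theorem \ref{SuperhedgePrices}, after which the formula for the tracking error is just a one-line substitution.

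First I would translate the two hypotheses of Case 1, namely $\frac{X_0^e}{n-e}<q$ and $np\ge eq+X_0^e$, into per-life bounds on $\frac{1}{n}X_0^{e(n)}(n;b)$. Substituting $e(n)=np(1+\epsilon)$, the first inequality becomes $\frac{1}{n}X_0^{e(n)}(n;b)<q(1-p(1+\epsilon))$ and the second becomes $\frac{1}{n}X_0^{e(n)}(n;b)\le p(1-q(1+\epsilon))$; together these are exactly the threshold (\ref{eq:n0}). The assumption $\max(p,q)<\tfrac{1}{1+\epsilon}$ ensures both right-hand sides are strictly positive, so there is a genuine positive per-life budget to satisfy.

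The core of the argument is then to exploit Proposition \ref{XoL_LDP}: since $0\le\eta<\epsilon$ we have $b=p(1+\eta)<p(1+\epsilon)=a$, so the rate $I_b(a)$ is strictly positive, and $\tfrac{1}{n}X_0^{e(n)}(n;b)\le\frac{n-\lfloor e(n)\rfloor}{n}e^{-nI_b(a)}\to 0$ exponentially fast. Consequently, for $n$ large enough, $\tfrac{1}{n}X_0^{e(n)}(n;b)$ drops below the minimum of the two positive thresholds above, placing us in Case 1 of Theorem \ref{SuperhedgePrices} and identifying the minimal superhedge as $e(n)$ property puts plus one XoL contract with excess $e(n)$.

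Finally, with Case 1 confirmed, $V_0(n)=e(n)q+X_0^{e(n)}(n;b)=np(1+\epsilon)q+X_0^{e(n)}(n;b)$, so dividing by $n$ and subtracting $pq$ yields immediately
\begin{equation*}
\frac{1}{n}V_0(n)-pq=\epsilon pq+\frac{1}{n}X_0^{e(n)}(n;b),
\end{equation*}
as claimed; the no-arbitrage condition (\ref{eq:XoLNoArb}) must of course be imposed on $X_0^{e(n)}(n;b)$ for the Combined Market to admit an ACMM in the first place. The only mildly delicate step is the verification that the LDP rate is strictly positive; everything else is algebraic rearrangement. I do not anticipate any genuine obstacle beyond carefully handling the $\lfloor\cdot\rfloor$ when passing from $X_0^{e(n)}(n;b)$ to its per-life version, which is absorbed harmlessly into the exponential decay.
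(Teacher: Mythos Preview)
Your proposal is correct and follows essentially the same route as the paper: translate the Case~1 hypotheses of Theorem~\ref{SuperhedgePrices} into the per-life threshold~(\ref{eq:n0}), invoke Proposition~\ref{XoL_LDP} to drive $\tfrac{1}{n}X_0^{e(n)}(n;b)$ below that threshold for large $n$, and then read off the tracking error from $V_0(n)=e(n)q+X_0^{e(n)}(n;b)$. If anything, you are slightly more explicit than the paper in checking that $I_b(a)>0$ (via $b<a$) and in spelling out why $\max(p,q)<\tfrac{1}{1+\epsilon}$ makes both sides of~(\ref{eq:n0}) strictly positive.
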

\begin{proof}
Since $X_0^e(n;b) \rightarrow 0$ exponentially fast, there is an $N_0 \in \mathbb{N}$ such that for all $n \ge N_0$, $X_0^e(n;b)$ satisfies (\ref{eq:n0}).
Note that 
\begin{equation}
	\frac{1}{n}X_0^e(n;b) < q(1-p(1+\epsilon))  \iff \frac{X_0^e(n;b)}{n-e(n)} <q \label{eq:XCond1}
\end{equation}
and
\begin{equation}
	\frac{1}{n}X_0^e(n;b) < p(1-q(1+\epsilon))  \iff X_0^e(n;b) +e(n)q < np. \label{eq:XCond2}
\end{equation}
By Theorem \ref{SuperhedgePrices}, the minimal superhedging portfolio consists of $e(n)$ property puts costing $q$ each and an XoL contract costing $X_0^e(n;b)$. The rest of the proposition follows because $V_0(n) = e(n)q +X_0^e(n;b)$.

\end{proof}
Note that Proposition \ref{AsymptoticHedge} implies that the asymptotic average hedging cost is unaffected by  the mortality loading $\eta$ that the reinsurers adds to the probability of death $p$ so long as  $\eta \le \epsilon$.

\bibliographystyle{plain}
\bibliography{DiscreteERMsub}

\end{document}